\documentclass[letterpaper, 10pt, conference]{ieeeconf}
\overrideIEEEmargins
\IEEEoverridecommandlockouts


\usepackage[dvipsnames]{xcolor}

\usepackage{./TLC}

\usepackage{circuitikz}
\usepackage{blox}
\usepackage{pgfplots}
\pgfplotsset{compat=1.16}
\usepgfplotslibrary{groupplots}

\pgfplotscreateplotcyclelist{colors}{
        {CornflowerBlue},
        {BurntOrange},
        {RubineRed},
        {Emerald},
        {SpringGreen},
        {Goldenrod}
}

\usepackage{mathtools}
\usepackage{gensymb}
\usepackage{physics}
\usepackage{siunitx}

\usepackage{epigraph}
\usepackage{shellesc}
\IEEEtriggeratref{33}

\title{Monotone RLC circuits}

\author{Thomas Chaffey$^{1}$ \and Rodolphe Sepulchre$^{1}$%
\thanks{The research leading to these results has received funding from the European
Research Council under the Advanced ERC Grant Agreement Switchlet n. 670645.}%
\thanks{$^{1}$University of Cambridge, Department of Engineering, Trumpington Street,
Cambridge CB2 1PZ, {\tt\small tlc37@cam.ac.uk}, {\tt\small r.sepulchre@eng.cam.ac.uk}.}
}
\begin{document}
\maketitle

\begin{abstract}
        The circuit-theoretic origins of maximal monotonicity are revisited using
        modern optimization algorithms for maximal monotone operators.  We present an
        algorithm for computing the periodic behavior of an interconnection of maximal
        monotone systems using a fixed point iteration. The fixed point iteration 
        may be split according to the interconnection structure of the system.  In
        this preliminary work, the approach is demonstrated on port interconnections 
        of maximal monotone resistors and LTI capacitors and inductors.  
\end{abstract}

\section{Introduction}

        The property of \emph{maximal monotonicity} first arose in the
        study of nonlinear electrical circuits, in early efforts to extend the
        tractability of linear, time invariant, passive networks to networks
        containing nonlinear resistors.   The prototype of a maximal monotone
        element was Duffin's
        \emph{quasi-linear} resistor \cite{Duffin1946}, a nonlinear resistor with a
        non-decreasing $i-v$ characteristic.  Quasi-linearity was refined by Minty
        \cite{Minty1960, Minty1961, Minty1961a} to produce the modern concept of
        maximal monotonicity, in the context of an algorithm for solving networks of
        nonlinear resistors.  Desoer and Wu \cite{Desoer1974} studied existence and
        uniqueness of solutions to networks of nonlinear resistors, capacitors and
        inductors defined by maximal monotone relations.

        Following the influential paper of Rockafellar in 1976
        \cite{Rockafellar1976}, maximal monotonicity has grown to become a
        fundamental property in convex optimization \cite{Rockafellar1997, Ryu2016,
        Ryu2020, Parikh2013, Bertsekas2011, Combettes2011}, forming the basis of a large body of
        work on tractable first order methods for large scale and nonsmooth
        optimization problems, which have seen a surge of interest in the last
        decade.  However, the physical significance of maximal monotonicty in
        nonlinear circuit theory has been somewhat forgotten.

        In this paper, we revisit the study of nonlinear electrical networks using
        modern tools from convex optimization.  We study the problem of computing the
        periodic output of a system described by an interconnection of maximal
        monotone input/output relations, 
        which is forced by a periodic input.  The approach is demonstrated on the
        class of circuits formed by connecting maximal monotone resistors in series
        and parallel with LTI capacitors and inductors. A similar problem has been studied by Heemels
        \emph{et al.} \cite{Heemels2017}, for the class of systems described by a
        Lur'e-type feedback interconnection of a passive LTI state space system and a
        maximal monotone nonlinearity.  In this paper, we take a purely input/output
        view, and show that the output may be computed using a fixed point iteration
        in the space of periodic trajectories.  This computation may be split using a
        \emph{base splitting scheme} such as the Douglas-Rachford algorithm, and the
        splitting corresponds precisely to the interconnection structure.
        Computational steps are performed individually for each system component.
        This approach is reminiscent of frequency response analysis of LTI systems
        using the transfer function of their components.  Existing frequency response
        methods for nonlinear systems are either approximate and limited in their
        applicability, as in harmonic analysis
        \cite{Feldmann1996, Blagquiere1966, Krylov1947, Slotine1991}, or involve performing a
        transient simulation and waiting for convergence \cite{Aprille1972,
Brogliato2016, Cellier2006}. 

        The fundamental property which allows our splitting approach is the
        preservation of maximal monotonicity under port interconnections
        \cite{Camlibel2013}, or, in the language of circuit theory, series and 
        parallel interconnections.  This property is reminiscent of the fundamental
        theorem of passivity (preservation under parallel and negative feedback
        interconnections), and indeed, maximal monotonicity can be viewed as a
        generalization of passivity.  For a causal, LTI transfer function, passivity
        is precisely monotonicity on the signal space $L_2$ \cite{Desoer1975}.
        Maximal monotonicity provides a generalization of passivity theory to
        nonlinear systems which retains a connection to an algorithmic theory - in
        contrast, the connection between passive LTI systems and convex optimization 
        given by the KYP lemma \cite{Yakubovich1962, Kalman1963, Popov1964} 
        is lost for nonlinear systems.  Maximal monotonicity has been a central
        property in the study of nonsmooth dynamical systems \cite{Brogliato2020}, and early
        connections between passive systems and maximal monotone relations exist in
        this work - in particular, in the study of Lur'e systems with maximal
        monotone nonlinearities in the feedback path \cite{Camlibel2016,
        Brogliato2004}.  

        The remainder of this paper is organized as follows. In
        Section~\ref{sec:elements}, we introduce some basic theory of
        maximal monotonicity.  In section~\ref{sec:computation}, we 
        develop a computational technique to compute the periodic output of a periodically
        driven maximal monotone system. In section~\ref{sec:1ports}, we introduce the
        class of RLC circuits with maximal monotone, nonlinear resistors.
        Section~\ref{sec:example} gives a computational example on such a circuit. In
        Section~\ref{sec:cyclo}, we discuss the connection between monotonicity and
        cyclo-passivity.  We conclude in Section~\ref{sec:conclusions} with some open
        questions for future research.

\section{Maximal monotone relations}
\label{sec:elements}

\subsection{Relations}

We begin by introducing some mathematical preliminaries.

\begin{definition}
        A \emph{relation} on a space $X$ is a subset $S \subseteq X \times X$.
\end{definition}

We write $y \in S(u)$ to denote $(u, y) \in S$.

The usual operations on functions can be extended to relations:
\begin{IEEEeqnarray*}{rCl}
        S^{-1} &=& \{ (y, u) \; | \; y \in S(u) \}\\
        S + R &=& \{ (x, y + z) \; | \; (x, y) \in S,\; (x, z) \in R \}\\
        SR &=& \{ (x, z) \; | \; \exists\, y \text{ s.t. } (x, y) \in R,\; (y, z) \in S \}.
\end{IEEEeqnarray*}
Note that the relational inverse $S^{-1}$ always exists, but in general, $S S^{-1}
\neq I$, where $I$ is the identity relation $\{(x, x)\;|\;x \in X\}$.

\subsection{Maximal Monotonicity}

The property of \emph{monotonicity} connects the physical property of energy
dissipation in a device to algorithmic analysis methods.  
Let $\mathcal{H}$ be a
Hilbert space with inner product $\bra{\cdot}\ket{\cdot}$ and induced norm
$\norm{x} = \sqrt{\bra{x}\ket{x}}$.  

Monotonicity on $\mathcal{H}$ is defined as follows.

\begin{definition}
        A relation $S \subseteq \mathcal{H}\times\mathcal{H}$ is called \emph{monotone} if
\begin{IEEEeqnarray*}{rCl}
        \langle u_1 - u_2 | y_1 - y_2 \rangle \geq 0
\end{IEEEeqnarray*}
for any $(u_1, y_1), (u_2, y_2) \in S$.  
A monotone relation is called \emph{maximal} if it is not properly contained in any
other monotone relation.
\end{definition}

Note that this definition refers to monotonicity in the operator theoretic sense, and
this is distinct from the notion of monotonicity in the sense of partial order
preservation by a state-space system (see, for example, \cite{Angeli2003}).

Monotonicity is preserved under a number of operations. 
The proof of the following lemma may be found in \cite{Ryu2020}.

\begin{lemma}
        \label{lem:monotone_properties}
        Consider relations $G$ and $F$ which are monotone on $\mathcal{H}$.  Then
        \begin{enumerate}
                \item $G^{-1}$ is monotone; \label{inversion}
                \item $G + F$ is monotone; \label{sum}
                \item $\alpha G$ is monotone for $\alpha > 0$;
                \item $C(x, y) = \left\{(u, v)\; \middle|\; u \in G(x), v \in F(y)
                        \right\}$ is monotone; \label{concatenation}
                \item If $F$ is monotone on a space of dimension $s$, and $M \in
                        \R^{s \times t}$, then the relation given by
                        \begin{equation*}
                                G(x) = M\tran F (Mx)
                        \end{equation*}
                        is also monotone. \label{congruence}
        \end{enumerate}
\end{lemma}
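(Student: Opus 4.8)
The plan is to prove each of the five claims directly from the definition of monotonicity, since each reduces to an elementary inner-product inequality. The unifying strategy is: take two arbitrary pairs in the relation under consideration, unpack what it means for them to belong to that relation, and then invoke the monotonicity of the constituent relations $G$ and/or $F$ to establish the required nonnegativity.

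Let me sketch the individual steps. For \eqref{inversion}, I would take $(y_1, u_1), (y_2, u_2) \in G^{-1}$, which by definition means $(u_1, y_1), (u_2, y_2) \in G$; the inner product $\langle y_1 - y_2 \mid u_1 - u_2 \rangle$ is the same as $\langle u_1 - u_2 \mid y_1 - y_2 \rangle$ by symmetry of the inner product, and monotonicity of $G$ gives the result. For \eqref{sum}, I would take two pairs $(x_i, y_i + z_i) \in G + F$ with $(x_i, y_i) \in G$ and $(x_i, z_i) \in F$; expanding $\langle x_1 - x_2 \mid (y_1 + z_1) - (y_2 + z_2) \rangle$ by bilinearity splits it into the sum of the monotonicity inequality for $G$ and that for $F$, each nonnegative. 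The scaling claim $\alpha G$ is immediate: the inner product picks up a factor $\alpha > 0$, preserving the sign. For the concatenation \eqref{concatenation}, the key observation is that on the product Hilbert space $\mathcal{H} \times \mathcal{H}$ the inner product decomposes as $\langle (x_1, y_1) - (x_2, y_2) \mid (u_1, v_1) - (u_2, v_2) \rangle = \langle x_1 - x_2 \mid u_1 - u_2 \rangle + \langle y_1 - y_2 \mid v_1 - v_2 \rangle$, and each summand is nonnegative by monotonicity of $G$ and $F$ respectively.

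The last claim, \eqref{congruence}, is the one I expect to require the most care, though it is still short. The idea is that for $(x_1, z_1), (x_2, z_2) \in G$ with $z_i = M\tran w_i$ and $w_i \in F(M x_i)$, I would write
\begin{equation*}
\langle x_1 - x_2 \mid M\tran(w_1 - w_2) \rangle = \langle M(x_1 - x_2) \mid w_1 - w_2 \rangle,
\end{equation*}
using the defining adjoint property of the transpose; the right-hand side is exactly the monotonicity inequality for $F$ evaluated at the points $Mx_1, Mx_2$, hence nonnegative. The main subtlety here is purely notational: making sure the inner products on the two spaces (dimension $t$ on the left, dimension $s$ on the right) are the correct ones and that $M\tran$ really is the adjoint of $M$ with respect to them, which holds for the standard Euclidean inner products. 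Once the adjoint identity is in hand, the result is immediate, so I anticipate no genuine obstacle beyond bookkeeping.
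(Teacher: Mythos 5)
Your proposal is correct, and since the paper omits the proof entirely (it simply defers to its reference \cite{Ryu2020}), there is no competing argument to compare against: your direct verification from the definition is precisely the standard proof given there. All five steps check out, including the only nontrivial point, the adjoint identity $\bra{x_1 - x_2}\ket{M\tran(w_1 - w_2)} = \bra{M(x_1 - x_2)}\ket{w_1 - w_2}$ used in the congruence claim, which holds as you say because the statement places $F$ on $\R^s$ and the new relation on $\R^t$ with their standard Euclidean inner products.
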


Maximality is preserved under inversion.  However, in general, maximality is not 
preserved when two relations are added (indeed, 
their sum may be empty). We make the following assumption on summations throughout
the rest of this paper, which guarantees maximality of the sum \cite[Thm. 1]{Rockafellar1970}.

\begin{assumption}
        \label{ass:domains}
        Any summation of two relations $G$ and $F$ obeys
                        \begin{IEEEeqnarray*}{rrCl}
                               & \interior \dom F \cap \dom G &\neq& \varnothing\\
                                \text{or } & \interior \dom G \cap \dom F &\neq&
                                \varnothing,
                        \end{IEEEeqnarray*}
                        where $\dom S$ denotes the domain of the relation $S$.
\end{assumption}

This assumption is sufficient (but not necessary) for the existence of solutions to
the summation (that is, the resulting relation is nonempty).  We omit the proof of this fact.

The preservation of monotonicity under the properties of
Lemma~\ref{lem:monotone_properties} allows us to show that several important system types are monotone
by testing the monotonicity of their component subsystems.  This idea is
explored by \c{C}amlibel and van der Schaft \cite{Camlibel2013}.  Here, we observe
that monotonicity is preserved under series and parallel port interconnections of
electrical elements.  This result forms the basis for the algorithmic methods to
follow.

Electrical 1-port circuits have two external terminals.  
Two variables may be measured across these
terminals - the port voltage $v$ and the magnetic flux linkage $\phi$.  Two variables may be measured 
through these terminals - the port current $i$ and the charge $q$.  We assume each of
these variables takes values in $\R$.  A 1-port is defined
by a relation between $v$ and $i$.
1-port circuits may be connected in series and parallel to form a new 1-port, with the interconnected
system constrained by Kirchoff's current and voltage laws.  These interconnections
preserve monotonicity.

\begin{proposition}\label{prop:series_parallel}
        Given two electrical 1-port elements $R_1$ and $R_2$, each represented by
        relations on a scalar Hilbert space $\mathcal{H}$ mapping current $i$ to voltage $v$, their series 
        interconnection
        \begin{enumerate}
        \item KCL: $i = i_1 = i_2$
        \item Device: $(i_1, v_1) \in R_1$, $\qquad(i_2, v_2) \in R_2$
        \item KVL: $v_1 + v_2 = v$,
        \end{enumerate}
        is a monotone relation on $\mathcal{H}$ from $i$ to $v$, and from $v$ to $i$.
        The dual property also holds, exchanging currents for voltages and series
        interconnection for parallel interconnection.
\end{proposition}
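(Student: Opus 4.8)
The plan is to show the series interconnection is monotone by rewriting the interconnected relation as a combination of the monotone building blocks $R_1$ and $R_2$ using the operations in Lemma~\ref{lem:monotone_properties}, so that monotonicity follows automatically from the preservation properties. The key observation is that the series interconnection constraints (KCL forces a common current $i$ through both elements, and KVL adds the two voltages) describe exactly the \emph{sum} of the two relations. Concretely, since $i = i_1 = i_2$ and $v = v_1 + v_2$ with $(i_k, v_k) \in R_k$, a pair $(i, v)$ belongs to the series interconnection precisely when $v \in R_1(i) + R_2(i)$; that is, the series interconnection \emph{is} the relation $R_1 + R_2$. By property~\ref{sum} of Lemma~\ref{lem:monotone_properties}, the sum of monotone relations is monotone, so the series interconnection is monotone from $i$ to $v$.

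For monotonicity in the reverse direction, from $v$ to $i$, I would invoke property~\ref{inversion}: the relation from $v$ to $i$ is simply $(R_1 + R_2)^{-1}$, and the inverse of a monotone relation is monotone. This handles both claimed directions for the series case with essentially no calculation beyond the algebraic identification.

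For the dual statement, I would argue by the electrical duality between series and parallel interconnection. In a parallel interconnection the voltages are equal ($v = v_1 = v_2$) and the currents add ($i = i_1 + i_2$), which is precisely the series construction applied to the inverse relations $R_1^{-1}$ and $R_2^{-1}$ (mapping voltage to current). Since $R_1, R_2$ monotone implies $R_1^{-1}, R_2^{-1}$ monotone by property~\ref{inversion}, the parallel interconnection from $v$ to $i$ equals $R_1^{-1} + R_2^{-1}$, which is monotone by property~\ref{sum}; its inverse, the map from $i$ to $v$, is then monotone by property~\ref{inversion} again. Alternatively, one can state the result directly and note it follows by exchanging the roles of current and voltage throughout the series argument.

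The main thing to be careful about is not a genuine obstacle but rather a matter of precision: I must verify that the set-theoretic identification of the series interconnection with $R_1 + R_2$ is exact, matching the definition of relational sum given in the excerpt, $S + R = \{(x, y+z) \mid (x,y) \in S,\ (x,z) \in R\}$. This requires checking that KCL supplies the common input argument $x = i$ shared by both relations and that KVL supplies the summed output $v = v_1 + v_2$, which it does. Once this identification is made, everything reduces to the already-established closure properties, so no estimates or inequalities need to be reproved here.
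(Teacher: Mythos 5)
Your proof is correct, but it takes a slightly different route through Lemma~\ref{lem:monotone_properties} than the paper does. You identify the series interconnection set-theoretically with the relational sum $R_1 + R_2$ (the identification is exact, as you verified against the definition $S + R = \{(x, y+z) \mid (x,y) \in S,\ (x,z) \in R\}$: KCL supplies the common argument, KVL the summed output) and then invoke the sum property~\ref{sum} plus inversion~\ref{inversion}. The paper instead proceeds in two stages: it first forms the concatenation $C$ of $R_1$ and $R_2$, monotone from $(i_1,i_2)$ to $(v_1,v_2)$ by property~\ref{concatenation}, and then applies the congruence property~\ref{congruence} with $M = (1,1)^{\top}$, which yields $M^{\top}C(Mi) = R_1(i) + R_2(i)$; inversion handles the $v$-to-$i$ direction as in your argument. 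The two routes are closely related --- the paper's construction effectively re-derives the sum property from properties~\ref{concatenation} and~\ref{congruence} --- so yours is the more economical argument, using a single closure property where the paper uses two. What the paper's congruence route buys is generality: with other choices of $M$ it extends directly to $n$-element series chains and to more general Kirchhoff interconnection structures, which foreshadows the multi-port perspective the paper attributes to \c{C}amlibel and van der Schaft. Your handling of the dual statement, realizing the parallel interconnection as $(R_1^{-1} + R_2^{-1})^{-1}$ via property~\ref{inversion}, is also valid and somewhat more explicit than the paper's one-line remark that the parallel proof is identical after swapping currents with voltages and KCL with KVL; note also that, consistent with the proposition claiming only monotonicity rather than maximality, neither your argument nor the paper's needs Assumption~\ref{ass:domains} here.
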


\begin{proof}
        It follows from Lemma~\ref{lem:monotone_properties},
        property~\ref{concatenation}, that the concatenation of $R_1$ and $R_2$ is
        monotone from $(i_1, i_2)$ to $(v_1, v_2)$.  Monotonicity of the
        interconnection from $i$ to $v$ then follows from taking 
        $M = (1, 1)\tran$ in Lemma~\ref{lem:monotone_properties},
        property~\ref{congruence}. Monotonicity from $v$ to $i$ follows by noting that the inverse of a monotone relation
        is monotone (Lemma~\ref{lem:monotone_properties}, property~\ref{inversion}).  
        The proof for a parallel interconnection is identical after exchanging
        currents for voltages and swapping KCL and KVL.
\end{proof}

The negative feedback interconnection of two systems $F$ and $G$ can be represented
as the inverse of the sum of $F^{-1}$ and $G$.  Noting that inversion preserves
monotonicity, the following generalization of the fundamental theorem of passivity
follows from Proposition~\ref{prop:series_parallel}.

\begin{corollary}
        Given two operators $F$ and $G$, each monotone on a scalar Hilbert space
        $\mathcal{H}$, their negative feedback interconnection $(F^{-1} +
        G)^{-1}$ is monotone.
\end{corollary}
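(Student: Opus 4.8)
The plan is to prove the corollary directly from the preservation properties established in Lemma~\ref{lem:monotone_properties} and Proposition~\ref{prop:series_parallel}. The key observation is that the negative feedback interconnection is, by the statement itself, expressed as the composite relation $(F^{-1} + G)^{-1}$, so the task reduces to tracking monotonicity through each of the three operations involved: one inversion applied to $F$, one summation, and one outer inversion.

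The steps proceed in the order in which the operations are nested. First I would apply property~\ref{inversion} of Lemma~\ref{lem:monotone_properties} to conclude that $F^{-1}$ is monotone, using the hypothesis that $F$ is monotone. Second, since both $F^{-1}$ and $G$ are now monotone on the same scalar Hilbert space $\mathcal{H}$, property~\ref{sum} gives that their sum $F^{-1} + G$ is monotone. Third, applying property~\ref{inversion} once more to $F^{-1} + G$ yields that $(F^{-1} + G)^{-1}$ is monotone, which is exactly the claimed conclusion. In this formulation each step is a one-line invocation of an already-established result.

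I do not expect a genuine obstacle here, since the corollary is essentially a bookkeeping consequence of the closure properties, and the paper has deliberately set up the relational algebra ($S^{-1}$, $S+R$) so that the expression $(F^{-1}+G)^{-1}$ is well-defined as a relation regardless of whether any particular interconnection has solutions. The one subtlety worth flagging is that monotonicity (as opposed to \emph{maximal} monotonicity) requires no domain condition, so Assumption~\ref{ass:domains} is not needed for the monotonicity claim; that assumption would only enter if one wished to upgrade the conclusion to maximality. Since the corollary asserts only monotonicity, the proof remains a clean chain of three applications of Lemma~\ref{lem:monotone_properties}, and I would state it in precisely that compact form.
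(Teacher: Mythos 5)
Your proof is correct and takes essentially the same route as the paper, which likewise obtains the result by combining preservation of monotonicity under inversion (Lemma~\ref{lem:monotone_properties}, property~\ref{inversion}) with monotonicity of the sum $F^{-1}+G$, the latter stated via Proposition~\ref{prop:series_parallel} but amounting to the sum property you invoke directly. Your added observation that Assumption~\ref{ass:domains} is not needed because the corollary claims only monotonicity, not maximality, is also accurate.
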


\subsection{Strongly monotone and Lipschitz relations}
\label{sec:more_properties}

Here we introduce some additional properties a relation may have.  These properties
are referred to in the next section, as requirements for convergence of some
algorithms.

\begin{definition}
        A relation $S$ has a \emph{Lipschitz constant of} $L>0$ if, for all $(u,
        w), (v, y) \in S$,
        \begin{equation*}
                \norm{u - v} \leq L\norm{w - y}.
        \end{equation*}
        If $L < 1$, $S$ is called a \emph{contraction}.  If $L = 1$, $S$ is called
        \emph{nonexpansive}.
\end{definition}

\begin{definition}
        A relation $S$ is said to be \emph{averaged} if there exists a $\theta
        \in (0, 1)$ such that $S = (1 - \theta)I + \theta G$, where $I$ is the
        identity relation and $G$ is some nonexpansive relation.
\end{definition}

\begin{definition}
        A relation $S$ is $\beta$-\emph{coercive} or \emph{strongly monotone} with
        parameter $\beta > 0$ if, for all $(u, w), (v, y) \in S$,
        \begin{equation*}
                \bra{u -v}\ket{w - y} \geq \beta\norm{u - v}^2.
        \end{equation*}
\end{definition}

\begin{definition}
        A relation $S$ is $\alpha$-\emph{cocoercive} for $\alpha > 0$ 
        if, for all $(u, w), (v, y) \in S$,
        \begin{equation*}
                \bra{u -v}\ket{w - y} \geq \alpha \norm{w - y}^2.
        \end{equation*}
\end{definition}

It is seen immediately that $F$ is $\alpha$-coercive if and only if $F^{-1}$ is
$1/\alpha$-cocoercive.  It also follows from the Cauchy-Schwarz inequality that $F$
has a Lipschitz constant of  $1/\alpha$ if $F$ is $\alpha$-cocoercive.  Finally, if
$A$ is $\alpha$-coercive (resp. $\alpha$-cocoercive) and $B$ is monotone, $A + B$ is
is $\alpha$-coercive (resp. $\alpha$-cocoercive).  For more details on these
properties, we refer the reader to \cite[\S 2.2]{Ryu2020}.

\section{Algorithmic steady-state analysis of monotone systems}
\label{sec:computation}
In this section, we develop an algorithmic method for computing the periodic response
of a maximal monotone system which is forced by a periodic input.

A trajectory $w(t)$ is said to be
$T$-periodic if $w(t) = w(t + T)$ for all $t$.  Let $L_{2, T}$ denote the Hilbert
space of finite energy, $T$-periodic, continuous signals, restricted to a single
period, with values in $\R$.  Define the inner product on $L_{2, T}$ by $\bra{x}\ket{y} = \int_0^T
xy \dd{t}$ and the induced norm by $\norm{x} = \sqrt{\bra{x}\ket{x}}$.
Let $l_{2, N}$ denote the discrete-time counterpart of $L_{2, T}$, that is, the space
of discrete, square summable,
real valued signals $x = {x_0, x_1, \ldots}$, where $x_k \in \R$, which are $N$-periodic: $x_0 = x_N$.  
This space may be identified with $\R^N$.

The algorithmic problem we consider is as follows.

\textbf{Problem statement}

Find, if possible, a $T$-periodic $u^\star$ such that $(u^\star, y^\star) \in R$, where
$R$ is a maximal monotone relation on $l_{2, T}$ mapping $u$ to $y$, and $y^\star$ is an
arbitrary periodic signal in the domain of $R$.

\textbf{General solution method}

Define the relation $\Delta R$ by $\Delta R(u) = R(u) - y^\star$.  Then any $u$ which
solves $0 \in \Delta R(u)$ is a possible solution $u^\star$.  The problem $0 \in
\Delta R(u)$ is solved using a fixed point iteration. \\

The maximal monotone systems we consider are constructed from simple maximal monotone relations
interconnected in ways that preserve maximal monotonicity.  The components may be
defined both by static relations and differential/integral relations.

The derivative is discretized to give the operator $D$, which is the backwards finite
difference, given by the relation
\begin{IEEEeqnarray*}{rCl}
        D = \bigg\{ (u, y) \; \bigg| \; y_T = N D_T u_T, y_{N-1} &=& N \sum_{k =
        0}^{N-2} y_k\\ &=& -N u_{N-2}, u_{N-1} = 0 \bigg\},
\end{IEEEeqnarray*}
where $x_T$ denotes the truncation of the signal $x$ to the first $N-1$ components,
$x_T = \{x_0, x_1, \ldots, x_{N-2} \}$, and $D_T$ is the $N-1 \times N-1$ matrix
\begin{IEEEeqnarray*}{rCl}
D_T &=& 
\begin{bmatrix}
        1 & 0 & 0 & \ldots & 0 & 0\\
        -1 & 1 & 0 & \ldots & 0 & 0\\
        0 & -1 & 1 & \ldots & 0 & 0\\
        \vdots & \vdots & \vdots & \ddots & \vdots &
        \vdots \\
        0 & 0 & 0 & \ldots & -1 & 1
\end{bmatrix}.
\end{IEEEeqnarray*}

The inverse relation $J \coloneqq D^{-1}$, which replaces the integral, is given by
\begin{IEEEeqnarray*}{rCl}
        J &=& \left \{ (u, y) \; \middle| \; y_T = \frac{1}{N} J_T u_T, u_{N-1} = N \sum_{k =
        0}^{N-2} u_k, y_{N-1} = 0 \right\},
\end{IEEEeqnarray*}
where $J_T$ is the $N-1 \times N-1$ Riemann sum:
\begin{IEEEeqnarray*}{rCl}
        J_T  &=& 
\begin{bmatrix}
        1 & 0 & 0 & \ldots & 0 & 0\\
        1 & 1 & 0 & \ldots & 0 & 0\\
        1 & 1 & 1 & \ldots & 0 & 0\\
        \vdots & \vdots & \vdots & \ddots & \vdots &
        \vdots \\
        1 & 1 & 1 & \ldots & 1 & 1
\end{bmatrix}.
\end{IEEEeqnarray*}

In order to have well defined integral signals, the input $u$ to any integral is
restricted to have
zero mean:
\begin{IEEEeqnarray*}{rCl}
       \sum_j u_j &=& 0,
\end{IEEEeqnarray*}
and the output of any integral is restricted to have zero offset: $y_{N-1} = 0$.

 Note that both $D$ and $J$ are maximal monotone relations - this can be checked by noting
 that $D_T$ and $J_T$ are both maximal monotone linear relations (their symmetric parts are
 positive definite) and the additional constraints in each relation do not
 affect maximal monotonicity.  Here we use the backwards finite difference discretization,
 but any discretization of the derivative may be used by defining $D$ and $J$
 appropriately.

\subsection{Fixed point algorithms} 
\label{sec:fixed_point_algorithms}

In this section, we briefly introduce the algorithms which will be used in the
examples that follow.  Rather than give an exhaustive treatment of these methods, we
give only the necessary details, and refer the interested reader to the introductory 
papers \cite{Parikh2013, Ryu2016, Bertsekas2011, Combettes2018}, and the
textbooks \cite{Bauschke2011, Ryu2020}.

A fixed point of a relation $F$ is a point $x$ such that $x \in F(x)$. The fixed
point, or Picard, iteration, is given by the update rule
\begin{IEEEeqnarray*}{rCl}
        x^{j + 1} &=& F(x^j),
\end{IEEEeqnarray*}
 where we use superscripts to denote
iterations of the entire signal $x$, as opposed to subscripts, which denote a
particular time instant of the signal.  For several classes of relation $F$, this
iteration will converge to a fixed point of $F$. This includes contractions
\cite{Banach1922} and averaged relations \cite{Mann1953, Krasnoselskij1955}.

There are a number of ways in which $0 \in S(x)$ can be solved by finding a fixed point of a
relation $F$ related to $S$.  For interconnections of 1-port elements, $S$ corresponds either to a
single element, in which case it is of the form $y = S(u)$, or corresponds to an
interconnection of two smaller 1-ports, in which case it is of the form $y = S_1(u) +
S_2(u)$.  We describe an algorithm corresponding to each of these two forms below.

\subsubsection*{Forward step}
The forward step is suitable for solving a single maximal monotone relation $S$.
It can be viewed as a generalization of gradient descent.
\begin{IEEEeqnarray*}{lrCl}
       & 0 &\in& S(x) \\
        \iff & 0 &\in& -\alpha S(x) + x - x\\
        \iff & x &\in& (I - \alpha S)(x).
\end{IEEEeqnarray*}
The fixed point iteration $x^{j+1} = x^{j} - \alpha S(x^j)$ converges geometrically
to the unique fixed point of $I - \alpha S$ when $S$ is strongly
monotone with parameter $m$ and Lipschitz with parameter $L$, and $\alpha \in (0,
2m/L^2)$.  The optimal contraction factor of $1 - m^2/L^2$ is given by $\alpha = m/L^2$.
Convergence is proved by showing $I - \alpha S$ is a contraction mapping; existence
and uniqueness then follow from the Banach fixed point theorem (see \cite{Ryu2016}).

\subsubsection*{Douglas-Rachford splitting}
When a relation can be split into a sum of two monotone relations, $y = S(u) = S_1(u) +
S_2(u)$, as it can for a series or parallel interconnection of two 1-ports, a \emph{splitting method} 
can be applied to separate the computation for each of the constituent relations, $S_1$
and $S_2$.  This is useful when computation is much simpler for the component relations
than for their sum.  

The Douglas-Rachford splitting algorithm
\cite{Douglas1956, Lions1979} converges if $S_1$ and $S_2$ are maximal monotone
\cite{Svaiter2011}.  The iteration for this algorithm is as follows:
\begin{IEEEeqnarray*}{rCl}
        x^{k + 1/2} &=& \res_{S_1}(i^k)\\
        z^{k + 1/2} &=& 2x^{k + 1/2} - i^k\\
        x^{k+1} &=& \res_{S_2}(z^{k + 1/2})\\
        i^{k+1} &=& i^k + x^{k+1} - x^{k + 1/2}.
\end{IEEEeqnarray*}
$\res_{S}$ denotes the relation $(I + \alpha S)^{-1}$, which is called the
\emph{resolvent} of $S$.
The Douglas-Rachford algorithm forms the basis of the Alternating Direction Method of
Multipliers (see \cite{Boyd2010} and references therein), and is one of the most successful fixed point
methods.

\section{Monotone RLC 1-ports}
\label{sec:1ports}

We illustrate the algorithmic computation of steady-state relations on the important
class of 1-port RLC circuits, formed from the parallel and series interconnection of monotone nonlinear resistors and
LTI capacitors and inductors.  We have already shown in
Proposition~\ref{prop:series_parallel} that these interconnections preserve
monotonicity.

A resistor is a relation $R$ on $\R$ between current and voltage:
\begin{IEEEeqnarray*}{rrCl}
        & R = \left\{(i, v) \in \R \times \R\; |\; v \in R(i)\right\}\\
        \text{or } & R =  \left\{(v, i) \in \R \times \R\; |\; i \in G(v)\right\}.
\end{IEEEeqnarray*}
The first form is the \emph{current controlled} form, the second is the \emph{voltage
controlled} form.  A resistor defines a 1-port relation on $L_{2, T}$ by applying
the relation $R$ at each time:

\begin{IEEEeqnarray*}{rCl}
        S = \left\{(i, v) \in L_{2, T} \times L_{2, T} \; |\; (v(t), i(t)) \in R \text{ for all } t \right\}.
\end{IEEEeqnarray*}

$\phi$ and $v$ are related by Faraday's law of induction:
\begin{IEEEeqnarray*}{rCl}
        \td{}{t} \phi &=& v, \IEEEyesnumber \label{eq:fundamental_pv}
\end{IEEEeqnarray*}

Similarly, $i$ and $q$ are related by 
\begin{IEEEeqnarray*}{rCl}
        \td{}{t} q &=& i. \IEEEyesnumber \label{eq:fundamental_iq}
\end{IEEEeqnarray*}

A capacitor is a relation $C$ on $\R$ between charge and voltage:  
\begin{IEEEeqnarray*}{rrCl}
        & C = \left\{(q, v) \in \R \times \R\; |\; q \in C(v)\right\}\\
        \text{or } & C =  \left\{(v, q) \in \R \times \R\; |\; v \in E(q)\right\}.
\end{IEEEeqnarray*}
An inductor is given by a relation $L$ on $\R$ between flux and current:
\begin{IEEEeqnarray*}{rrCl}
        & L = \left\{(i, \phi) \in \R \times \R\; |\; \phi \in L(i)\right\}\\
        \text{or } & L = \left\{(\phi, i) \in \R \times \R\; |\; i \in A(\phi)\right\}.
\end{IEEEeqnarray*}
Capacitors and inductors define relations on $L_{2, T}$ between charge and voltage
and flux and current, respectively.  Composing these relations with the appropriate fundamental law
\eqref{eq:fundamental_pv} or \eqref{eq:fundamental_iq} gives 1-port relations between current and voltage.
 
The following lemma gives a characterization of the monotonicity of resistors,
capacitors and inductors on $L_{2, T}$ in terms of their devices laws.
\begin{lemma}
        \label{lem:monotone-r-c}
        \begin{enumerate}
                \item A resistor is monotone on $L_{2, T}$ if it defines a monotone relation
                        on $\R$ between $i(t)$ and $v(t)$ for all $t$. \label{part_1}
                \item A capacitor or inductor is monotone on $L_{2, T}$ if
                        its device law is linear, time invariant and has positive
                        gradient. \label{part_2}
        \end{enumerate}
\end{lemma}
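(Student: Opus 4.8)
The plan is to handle the two parts separately, in each case unfolding the defining monotonicity inequality into the time integral $\langle x | y \rangle = \int_0^T xy \dd{t}$ on $L_{2,T}$ and exploiting the scalar structure of the device laws. For part~\ref{part_1}, I would take arbitrary input/output pairs $(i_1, v_1), (i_2, v_2) \in S$ and write
\begin{IEEEeqnarray*}{rCl}
        \langle i_1 - i_2 | v_1 - v_2 \rangle &=& \int_0^T \big(i_1(t) - i_2(t)\big)\big(v_1(t) - v_2(t)\big) \dd{t}.
\end{IEEEeqnarray*}
Because the scalar pairs $(i_1(t), v_1(t))$ and $(i_2(t), v_2(t))$ belong to the scalar relation $R$ at every instant $t$, the hypothesis that $R$ is monotone on $\R$ forces the integrand to be pointwise nonnegative; integration preserves the inequality, so the left-hand side is nonnegative and $S$ is monotone on $L_{2,T}$. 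This is the whole argument: monotonicity of a scalar static relation lifts to its pointwise application on the signal space with no further work.

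For part~\ref{part_2}, I would first observe that a linear, time-invariant device law with positive gradient is precisely $q = cv$ for the capacitor (respectively $\phi = \ell i$ for the inductor) with a positive constant $c$ (respectively $\ell$). Composing with the fundamental law~\eqref{eq:fundamental_iq} (respectively~\eqref{eq:fundamental_pv}) then yields the 1-port relation $i = c\,\dot{v}$ (respectively $v = \ell\,\dot{i}$). The crux is that the derivative acts monotonically on periodic signals: writing $w = v_1 - v_2$,
\begin{IEEEeqnarray*}{rCl}
        \langle \dot{w} \,|\, w \rangle &=& \int_0^T w\,\dot{w} \dd{t} = \tfrac{1}{2}\big(w(T)^2 - w(0)^2\big) = 0,
\end{IEEEeqnarray*}
where the final equality is forced by $T$-periodicity, $w(T) = w(0)$. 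Since multiplying a monotone relation by the positive constant $c$ preserves monotonicity (the scaling property of Lemma~\ref{lem:monotone_properties}, that $\alpha G$ is monotone for $\alpha > 0$), it follows that $\langle i_1 - i_2 | v_1 - v_2 \rangle = c\,\langle \dot{w} | w \rangle = 0 \geq 0$, and the inductor case is identical after exchanging the roles of current and voltage.

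I expect the only real obstacle to be making rigorous sense of the derivative on $L_{2,T}$, whose elements are merely continuous rather than differentiable. To close this gap I would either restrict the computation to the dense subspace of smooth periodic signals and pass to the limit, or argue directly in the discretized setting using the operator $D$ defined above, where the vanishing boundary term is replaced by the periodic wrap-around already encoded in $D$ (indeed $D_T$ was observed to be maximal monotone, which is precisely the discrete analogue of this lemma). It is worth flagging that the inner product evaluates to exactly zero rather than to a merely nonnegative number: ideal reactive elements are lossless over a period, so they sit on the boundary of the monotone cone, and this borderline, energy-neutral behaviour is what distinguishes them from strictly dissipative resistors.
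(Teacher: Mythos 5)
Your proposal is correct and follows essentially the same route as the paper: part~\ref{part_1} is the identical pointwise-to-integral lifting, and part~\ref{part_2} is the same telescoping computation $\int_0^T w\,\dot{w}\dd{t} = \tfrac{1}{2}\left(w(T)^2 - w(0)^2\right) = 0$ via periodicity that the paper performs (following \cite[Ex.~20.9]{Bauschke2011}), with the positive constant pulled out front. Your closing remarks on regularity of the derivative and the discrete analogue via $D$ go slightly beyond what the paper makes explicit, but they do not change the argument.
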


\begin{proof}
        We begin by showing that a memoryless relation on $L_{2, T}$ between $i$ and
        $v$
        (that is, a relation where $v(t)$ depends only on $i(t)$) is
        monotone if it is monotone on $\R$.  Indeed, by monotonicity on $\R$, we have
        \begin{IEEEeqnarray*}{rCl}
                (i_1(t) - i_2(t))(v_1(t) - v_2(t)) \geq 0 \text{ for all } t,
        \end{IEEEeqnarray*}
        from which it follows that
        \begin{IEEEeqnarray*}{rCl}
                \bra{i_1 - i_2}\ket{v_1 - v_2} &=& \int_{0}^{T} (i_1(t) -
                i_2(t))(v_1(t) - v_2(t)) \dd{t}\\
                                               &\geq& 0.
        \end{IEEEeqnarray*}
        This proves part~\ref{part_1} of the lemma. 

        We now show that an LTI capacitor is monotone on $L_{2, T}$.  The proof
        follows the lines of \cite[Ex. 20.9]{Bauschke2011}.  The proof for an
        inductor is identical.
        \begin{IEEEeqnarray*}{l}
               \int_0^T (v_1(t) - v_2(t))(i_1(t) - i_2(t))\dd{t} \\ =  \int_0^T
               C(v_1(t) - v_2(t))(\dot{v}_1(t) - \dot{v}_2(t))\dd{t}\\
              = \frac{C}{2} \int^T_0 \td{}{t} |v_1(t) - v_2(t)|^2 \dd{t}\\
              = \frac{C}{2} (|v_1(T) - v_2(T)|^2 - |v_1(0) - v_2(0)|^2)\\
              = 0.
        \end{IEEEeqnarray*}
\end{proof}

The class of systems which can be represented by series and parallel interconnection
of maximal monotone resistors and LTI capacitors and inductors encompasses Lur'e
systems with a passive LTI system in the forward path and a maximal monotone relation
in the return path.  Indeed, if the forward path has a transfer function $G(s)$ and
the feedback path is a relation $R$, the Lur'e system can be synthesized as the
series interconnection of a resistor with resistance relation $R$ and an LTI network
with impedance $G(s)^{-1}$ (which can be synthesized using the Bott-Duffin
construction \cite{Bott1949}).  The $i-v$ relation on $L_{2,
T}$ of this 1-port is $v = (R + H^{-1})^{-1}(i)$, where $H$ is the relation on $L_{2,
T}$ corresponding to $G$.

The periodic response of such Lur'e systems is studied by Heemels \emph{et al.}
\cite{Heemels2017}, who give two algorithms for computing the periodic output of such
a system, given a periodic input.  Given a state space realization, they show that
the system can be represented as a maximal monotone differential inclusion, and that
backwards Euler discretization corresponds to computing the resolvent of this
differential inclusion at each time step.  Their first algorithm involves iteratively 
computing this resolvent forward in time.  Their second algorithm combines the
computation of this resolvent over a period with a periodic boundary condition.
In comparison, the method we present here has several advantages.  Firstly, it is
independent of a state space realization or differential inclusion representation -
the relations used in computation represent the components of the system and retain
their physical meaning.  Secondly, and most importantly, this allows the computation
to be split in a way which corresponds to the structure of the system.  Lastly, our
method applies immediately to any time discretization method (by an appropriate
definition of the $D$ and $J$ relations).

\subsection{Existence of periodic solutions}
We have presented algorithms for finding periodic solutions of periodically driven
systems, without guaranteeing such solutions exist.  Here, we collect a few basic
results which show that such periodic solutions do exist for large classes of
maximal monotone RLC 1-ports.  We begin by noting that for resistors, capacitors, inductors and purely
series or parallel circuits, periodic inputs map to periodic outputs, regardless of
the monotonicity of the elements, provided that signals which are integrated have
zero mean. In the interest of brevity, we omit the proofs of the following two
propositions, which are straightforward.

\begin{proposition}\label{thm:periodic_elements}
        Memoryless relations and the derivative map $T$-periodic inputs to
        $T$-periodic outputs. The integral maps zero mean $T$-periodic inputs to
        $T$-periodic outputs.
\end{proposition}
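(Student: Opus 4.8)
The plan is to prove the three claims of Proposition~\ref{thm:periodic_elements} separately, each by a direct verification that $T$-periodicity (or its discrete $N$-periodicity analogue $x_0 = x_N$) is preserved by the relation in question. The key observation throughout is that periodicity is a purely structural property of the signal --- it is entirely independent of the monotonicity or even the shape of the device law --- so the proofs will exploit only the definitions of the derivative relation $D$, the integral relation $J$, and the pointwise action of a memoryless relation.

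First I would handle the memoryless case. A memoryless relation $S$ on $L_{2, T}$ acts pointwise: $(u, y) \in S$ means $(u(t), y(t)) \in R$ for all $t$ for some relation $R$ on $\R$. If $u$ is $T$-periodic, then $u(0) = u(T)$, and any admissible output satisfies $y(0) \in R(u(0)) = R(u(T)) \ni y(T)$; choosing the same branch of $R$ at $t = 0$ and $t = T$ yields $y(0) = y(T)$, so $y$ is $T$-periodic. In the discrete setting this is even cleaner: the relation is applied componentwise, the input is $N$-periodic ($u_0 = u_N$), and the output inherits $y_0 = y_N$ by construction. This step is essentially immediate.

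Second, for the derivative map I would appeal directly to the discrete relation $D$ defined in Section~\ref{sec:computation}. The relation $D$ already builds in the periodic boundary condition through its last row ($y_{N-1} = N\sum_{k=0}^{N-2} y_k = -N u_{N-2}$ together with $u_{N-1} = 0$), so given any $N$-periodic input the finite-difference $D_T u_T$ produces an output that closes up over the period; one simply checks that the wrap-around difference is consistent, i.e.\ that the defining constraints force $y_0 = y_N$. The continuous-time version is the observation that differentiating a $T$-periodic $C^1$ signal gives a $T$-periodic signal, since $\dot{w}(t + T) = \dot{w}(t)$ follows from $w(t+T) = w(t)$.

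Third, for the integral map $J = D^{-1}$, the zero-mean hypothesis is exactly what is needed. The continuous argument is that the antiderivative $y(t) = \int_0^t u(s)\,\dd{s}$ satisfies $y(T) - y(0) = \int_0^T u(s)\,\dd{s}$, which vanishes precisely when $u$ has zero mean, so $y$ returns to its starting value and is $T$-periodic. The discrete counterpart is encoded in $J$ by the constraint $u_{N-1} = N\sum_{k=0}^{N-2} u_k$, which enforces $\sum_j u_j = 0$, together with $y_{N-1} = 0$; the Riemann-sum matrix $J_T$ then yields a periodic output. The only point requiring care --- and the main (though still mild) obstacle --- is bookkeeping the boundary components in the discrete relations $D$ and $J$ to confirm that the stated constraints are exactly equivalent to periodicity plus zero mean, rather than merely sufficient; this is the routine calculation that the authors suppress by calling the propositions ``straightforward.''
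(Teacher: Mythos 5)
The paper does not actually contain a proof of this proposition---the authors explicitly omit it as ``straightforward''---and your direct verification is precisely the routine argument they are alluding to: pointwise action for memoryless relations, $\dot{w}(t+T)=\dot{w}(t)$ obtained by differentiating $w(t+T)=w(t)$ for the derivative, and $y(T)-y(0)=\int_0^T u(s)\dd{s}=0$ under the zero-mean hypothesis for the integral, with the discrete statements read off from the boundary constraints built into $D$ and $J$. Your proposal is correct (your ``same branch'' remark properly handles the multivalued memoryless case, where the claim is really that a periodic output exists, automatic once the relation is viewed on $L_{2,T}$), and your closing caution about the discrete bookkeeping is warranted: the constraints as printed in the paper carry sign/scaling slips---telescoping gives $\sum_{k=0}^{N-2} y_k = N u_{N-2}$, so the zero-mean condition on the output of $D$ reads $y_{N-1} = -\sum_{k=0}^{N-2} y_k = -N u_{N-2}$, and correspondingly the input constraint on $J$ should enforce $\sum_j u_j = 0$ rather than $u_{N-1} = N\sum_{k=0}^{N-2} u_k$---which is exactly the zero-mean restriction the authors state in words immediately afterwards.
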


%

\begin{proposition}\label{thm:periodic_series_parallel}
        A series (resp. parallel) interconnection of $n$ 1-ports which map $T$-periodic
        currents (voltages) to
        $T$-periodic voltages (currents) also maps $T$-periodic currents (voltages) to $T$-periodic
        voltages (currents).  
\end{proposition}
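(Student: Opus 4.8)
The plan is to prove the series statement directly and then recover the parallel statement by the duality already recorded in Proposition~\ref{prop:series_parallel}, exchanging currents for voltages and swapping KCL and KVL. The whole argument rests on two elementary observations: a series interconnection routes the \emph{same} current through every element (KCL), and a finite sum of $T$-periodic signals is again $T$-periodic. No appeal to monotonicity is needed, which is consistent with the remark preceding the statement.

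First I would fix a $T$-periodic input current $i$ and feed it into the series interconnection, whose constraints are KCL $i = i_1 = \cdots = i_n$, the device laws $(i_k, v_k) \in R_k$, and KVL $v = \sum_{k=1}^n v_k$. By KCL every element sees the common input $i_k = i$, which is $T$-periodic. Since each $R_k$ maps $T$-periodic currents to $T$-periodic voltages by hypothesis, there exists for each $k$ a $T$-periodic voltage $v_k$ with $(i, v_k) \in R_k$.

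It then remains to check that the output voltage inherits periodicity. Because the interconnection involves only finitely many elements, I would compute $v(t + T) = \sum_{k=1}^{n} v_k(t + T) = \sum_{k=1}^{n} v_k(t) = v(t)$, using $T$-periodicity of each summand, so $v$ is $T$-periodic and $(i, v)$ lies in the interconnected relation. This exhibits, for the given $T$-periodic current, a $T$-periodic voltage produced by the series 1-port. The parallel case is identical after the dual relabelling of Proposition~\ref{prop:series_parallel}: a $T$-periodic port voltage is imposed on every element by KVL, each element returns a $T$-periodic current by hypothesis, and KCL sums these finitely many $T$-periodic currents into a $T$-periodic port current.

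There is no genuine obstacle here; the only points that require a moment's care are the relational (set-valued) reading of the phrase ``maps $T$-periodic inputs to $T$-periodic outputs'' --- which I would interpret as the existence, for each periodic input, of at least one periodic output in the relation --- and the finiteness of the interconnection, which is exactly what guarantees closure of the periodic signals under the summation in KVL (an infinite sum could destroy periodicity).
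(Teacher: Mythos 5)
Your proof is correct, and it matches the paper exactly in spirit: the paper explicitly omits the proofs of Propositions~\ref{thm:periodic_elements} and~\ref{thm:periodic_series_parallel} as ``straightforward,'' and the argument it has in mind is precisely yours --- KCL (resp.\ KVL) imposes the common $T$-periodic input on every element, each element returns a $T$-periodic output by hypothesis, and the finite sum in KVL (resp.\ KCL) preserves $T$-periodicity. Your two flagged caveats (the set-valued, existential reading of ``maps periodic inputs to periodic outputs,'' and finiteness of $n$) are exactly the right points of care, so nothing is missing.
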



Finally, we show that maximal monotone RLC 1-ports map discrete periodic inputs to
discrete periodic outputs,
provided they contain a resistor which is both coercive and cocoercive and the input
lies within the domain of the system.  This is a
resistor with slope bounded between two strictly positive real numbers. 
Other classes of systems with this property include contractive state space systems
\cite{Sontag2010} and approximately finite memory input/output maps \cite{Sandberg1992}.

\begin{theorem}
        A 1-port, given by a relation on $l_{2, T}$, which is composed of series and
        parallel interconnections of monotone
        resistors and LTI capacitors and inductors, has a unique
        $T$-periodic output for every $T$-periodic input in its domain if
        it contains a resistor which is coercive and cocoercive.
\end{theorem}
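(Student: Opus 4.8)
The plan is to show that the overall 1-port relation $R$ on $l_{2,T}$ is not merely maximal monotone but, because it contains a coercive and cocoercive resistor, is itself both coercive (strongly monotone) and cocoercive; existence and uniqueness of the periodic output then follow from standard properties of such relations. First I would record that $R$ is maximal monotone: each series or parallel interconnection combines its two constituent relations through a sum and, for the dual causality, a relational inverse (Proposition~\ref{prop:series_parallel}), and by Lemma~\ref{lem:monotone_properties} together with Assumption~\ref{ass:domains}, imposed at every summation to secure maximality, the relation built at the root of the interconnection tree is maximal monotone. Since $l_{2,T}$ may be identified with $\R^N$, every LTI capacitor and inductor relation encountered is bounded, and by Lemma~\ref{lem:monotone-r-c} each resistor, capacitor and inductor is monotone.

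The heart of the argument is a structural induction over the interconnection tree carrying the invariant: the sub-1-port containing the distinguished resistor is coercive and cocoercive, while every sub-1-port not containing it is monotone. For the base case, the distinguished resistor is $m$-coercive and $\alpha$-cocoercive for some $m, \alpha > 0$, since its slope lies between two strictly positive numbers, and every other leaf is monotone. For the inductive step I would use that each interconnection presents its two sub-relations as a sum $S_1 + S_2$ in a common current- or voltage-controlled form, obtained where necessary by a relational inverse. Inversion preserves monotonicity (Lemma~\ref{lem:monotone_properties}, property~\ref{inversion}) and, since a relation is coercive if and only if its inverse is cocoercive, it preserves the property of being simultaneously coercive and cocoercive while exchanging the two constants. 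When the distinguished branch $S_1$ (coercive and cocoercive by hypothesis) is added to the monotone branch $S_2$, the facts recorded in Section~\ref{sec:more_properties} --- that the sum of a coercive (resp. cocoercive) relation with a monotone relation is again coercive (resp. cocoercive) --- show $S_1 + S_2$ inherits both properties. Propagating the invariant to the root yields that $R$ is coercive and cocoercive.

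It remains to convert these two bounds into existence and uniqueness. Writing the output $y$ associated with an input $u$ as a solution of $0 \in R^{-1}(y) - u$, I observe that $R^{-1}$ is strongly monotone (this is cocoercivity of $R$) and Lipschitz (this is coercivity of $R$, via Cauchy--Schwarz), so the shifted relation $R^{-1} - u$ is again strongly monotone and Lipschitz. On the finite-dimensional space $l_{2,T}$ the forward step of Section~\ref{sec:fixed_point_algorithms} then converges geometrically to the unique fixed point, giving a unique $T$-periodic output for every $T$-periodic input, with the domain in fact being all of $l_{2,T}$. That the iterates remain genuine $T$-periodic signals is guaranteed by Propositions~\ref{thm:periodic_elements} and~\ref{thm:periodic_series_parallel}, provided the zero-mean condition on integrated signals is respected.

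The main obstacle I anticipate is the bookkeeping of the inductive step: one must verify at every node that the distinguished resistor's branch genuinely appears as a coercive-and-cocoercive summand added to a purely monotone summand, keeping track of the inverses that toggle between the current- and voltage-controlled forms and of the zero-mean and zero-offset constraints that make each summation satisfy Assumption~\ref{ass:domains}. The delicate analytic point underlying this bookkeeping is the preservation of cocoercivity under addition of a monotone relation, on which the entire induction rests.
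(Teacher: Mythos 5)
Your proposal is correct and takes essentially the same route as the paper's own proof: both arguments first propagate coercivity and cocoercivity through the interconnection using the two preservation facts of Section~\ref{sec:more_properties} (your explicit structural induction is just a careful rendering of the paper's statement that addition with a monotone relation and inversion preserve the pair of properties), and both then conclude by noting that cocoercivity yields a Lipschitz bound so that the forward-step iteration converges geometrically to the unique zero of the incremental relation. The only cosmetic difference is causality bookkeeping --- you solve $0 \in R^{-1}(y) - u$ for the output where the paper offsets the forward relation by the given periodic signal --- which is immaterial because the first part establishes coercivity and cocoercivity both from $i$ to $v$ and from $v$ to $i$.
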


\begin{proof}
        The proof has two parts.  Building a 1-port from components involves the
        operations of addition and inversion.  First, we note that the properties of
        coercion 
        and cocoercion are preserved under these operations. Specifically, $G(u) \coloneqq A(u) +
        B(u)$ is coercive and cocoercive if $A$ and $B$ are both monotone and $A$ is
        coercive and cocoercive.  If $G$ is coercive and cocoercive, $G^{-1}$ is also coercive and
        cocoercive.  These two properties show that any circuit that is composed of
        monotone elements with a single element which is coercive and cocoercive is
        coercive and cocoercive both from $i$ to $v$ and from $v$ to $i$.

        The second part of the proof shows that if $G$ is coercive and cocoercive, it
        has a $T$-periodic output for every $T$-periodic input.  We consider the
        1-port as a relation on $l_{2, T}$. $G$ cocoercive with
        parameter $\alpha$ implies $G$ is Lipschtiz with parameter $1/\alpha$.
        Following the computational method we have presented,
        we take, for a given periodic $y^\star$, the incremental relation $\Delta y =
        G(u) - y^\star$.  Convergence of the forwards step algorithm when
        $A$ is strongly monotone and Lipschitz guarantees the uniqueness and existence of a zero to
        the incremental relation.
\end{proof}

Note that the requirement that inputs lie in the domain of the interconnection is
restrictive for some interconnections - for example, if integrators are used (for
instance in capacitors mapping current to voltage), their domain is zero mean
input currents.

\section{Example}
\label{sec:example}

An envelope detector is a simple nonlinear circuit consisting of a diode in series
with an LTI RC filter
 (Figure~\ref{fig:envelope_detector}).  It is used to demodulate AM radio signals.

\begin{figure}[hb]
        \centering
        \begin{circuitikz}[american]
                \draw (0, 0) to[short, o-] (2, 0)
                (0, 2) to[short, o-, i=$i$] (1, 2)
                to[D] (2.5, 2)
                to[C=$C$] (2.5, 0)
                (2, 2) to[short] (4, 2)
                to[R=$R$] (4, 0)
                to[short] (2, 0)
                (0, 2) to[open, v^=$v$] (0, 0);
        \end{circuitikz}
        \caption{An envelope detector, configured as a 1-port.}%
        \label{fig:envelope_detector}
\end{figure}
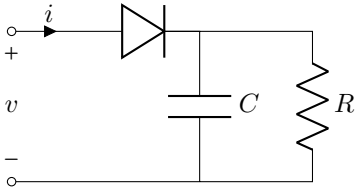

We model the
diode using the Shockley equation:
\begin{equation*}
        v = nV_T \ln\left(\frac{\Delta i + i^\star}{I_s} + 1\right),
\end{equation*}
where $I_s$ is the reverse bias saturation current, $V_T$ is the thermal voltage and $n$
is the ideality factor.

The outermost interconnection in this circuit is the series interconnection of the
diode and filter:
\begin{equation*}
        v = R_{\text{diode}}(i) + R_{RC}(i),
\end{equation*}
where $R_{\text{diode}}$ is the current to voltage relation of the diode, and
$R_{RC}$ is the current to voltage map of the RC filter.

The RC filter is itself a parallel interconnection of a resistor and capacitor, which maps voltage to current:
\begin{IEEEeqnarray*}{rCl}
        R_{RC} &=& G_{RC}^{-1}\\
        i_{RC} &=& G_{RC}(v_{RC}).
\end{IEEEeqnarray*}

\noindent\textbf{Current to voltage simulation}

Computing the voltage given a current input involves applying the relation of
the outer series interconnection.  The voltage response is the sum of the diode
voltage and the filter voltage.  The diode voltage may be computed immediately by
applying the diode relation $R_{\text{diode}}$.  To compute the filter voltage, we
have to invert the filter relation.  This is done using the forward step
algorithm:
\begin{IEEEeqnarray*}{rCl}
        v_{RC}^{j+1} &=& v_{RC}^j - \alpha \Delta G_{RC}(v_{RC}^j),
\end{IEEEeqnarray*}
where $\Delta G_{RC}$ is the forward relation of the RC circuit offset by the
input current:
\begin{IEEEeqnarray*}{rCl}
        \Delta G_{RC}(v_{RC}) = G_{RC}(v_{RC}) - i^\star.
\end{IEEEeqnarray*}
$G$ is the forward relation of the RC circuit, which encompasses the following steps:
\begin{IEEEeqnarray*}{rrCl}
        \text{KVL:}\quad & \begin{bmatrix}
               v_C \\ v_R 
       \end{bmatrix} &=& \begin{bmatrix}
               1 \\ 1
       \end{bmatrix} v\\
        \text{Devices:}\quad & \begin{bmatrix}
               q_C \\ i_R 
       \end{bmatrix} &=& \begin{bmatrix}
               C v_C \\
               \frac{1}{R} v_R
       \end{bmatrix}\\
                \text{Physical law:}\quad & i_C &=& D q_C\\
      \text{KCL:}\quad& i &=& \begin{bmatrix}
      1 & 1 \end{bmatrix}
      \begin{bmatrix}
              i_C \\ i_R
      \end{bmatrix}.
\end{IEEEeqnarray*}
These steps can be collapsed to a single linear operator:
\begin{IEEEeqnarray*}{rCl}
        G = \frac{C}{N}D + \frac{1}{R}I.
\end{IEEEeqnarray*}
As $\Delta G$ is affine, it has a Lipschitz constant $L$ equal to its
largest singular value and is coercive with constant $m = \lambda_{\min}((\Delta G +
\Delta G\tran)/2)$. 
Hence the forward step converges geometrically for any choice of $\alpha \in
(0, 2m/L^2)$.

Figure~\ref{fig:forward_simulation} shows the results of performing this scheme with
an input of $i^\star = 1 + \sin(2\pi t)\si{\ampere}$, with $R = 1\si{\ohm}$, $C =
1\si{\farad}$, $I_s = 1\times10^{-14}\si{\ampere}$, $n = 1$ and $V_T = 0.02585\si{\volt}$.  
The number of time steps used is $500$.

\begin{figure}[h]
        \centering
        \begin{tikzpicture}
                \begin{groupplot}
                        [
                        group style={
                                group size=1 by 2,
                                vertical sep = 0.5cm
                        },
                        width=0.5\textwidth,
                        height=4cm,
                        cycle list name=colors,
                        grid=both,
                        grid style={line width=.1pt, draw=Gray!20}, 
                        axis x line=bottom,
                        axis y line=left
                        ]
                        \nextgroupplot[ylabel={\footnotesize Voltage (V)}, xmin=0,
                        xmax=1]
                        \addplot[CornflowerBlue] table [x = t, y = v, col sep = comma, mark = none]{"./envelope_detector.csv"};
                                \addlegendentry{\footnotesize $v$ - output};
                                
                                \nextgroupplot[xlabel={\footnotesize Time (s)}, ylabel={\footnotesize Current (A)}, xmin=0, xmax=1]
                        \addplot[BurntOrange] table [x = t, y = i, col sep = comma, mark = none]{"./envelope_detector.csv"};
                                \addlegendentry{\footnotesize $i$ - input};
                                
                \end{groupplot}
        \end{tikzpicture}
        \caption{Input current $i^\star$ and the resulting voltage $v$ for an
        envelope detector.  One period of a periodic input and output is shown.}%
        \label{fig:forward_simulation}
\end{figure}
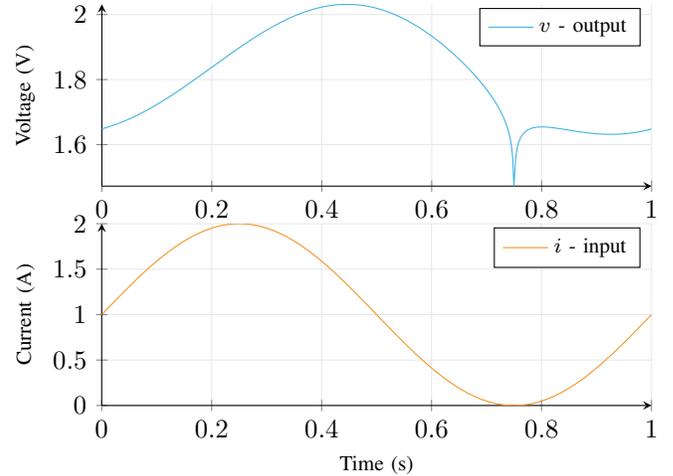

\noindent\textbf{Voltage to current simulation}\\
\indent The voltage to current simulation involves computing the inverse relation of the
outer series interconnection.

The properties of the diode limit the available algorithms that one may apply.  It is
maximal monotone, but it is neither Lipschitz nor strongly monotone, and its range
is not the full space $\R$.  Furthermore, the resolvent of the diode cannot be
computed analytically.  The diode relation is however fully separable in time,
meaning the resolvent values at each time point can be computed in parallel.

The incremental voltage $\Delta v = v - v^\star$ is given as a relation of $i$ by
\begin{IEEEeqnarray*}{rCl}
        \Delta v = R_\text{diode}(i) + R_{RC}(i) - v^\star.
\end{IEEEeqnarray*}

The split nature of the relation into diode and RC circuit parts suggests the use of
a splitting algorithm.  Here we apply the Douglas-Rachford splitting algorithm.  This
involves applying both the resolvents $\res_{RC}$ and $\res_{\text{diode}}$.

The resolvent $\res_{RC}$ is given by $(I + \lambda G_{RC}^{-1})^{-1}$.  This is
computed as follows:
\begin{IEEEeqnarray*}{rCl}
        y &=& \res_{RC}(u)\\
          &=& (I + \lambda G_{RC}^{-1})^{-1} u\\
        (I + \lambda G_{RC}^{-1})y &=& u\\
        (G_{RC} + \lambda I)y &=& G_{RC} u.
\end{IEEEeqnarray*}
This last line is solved using a general purpose linear system solver.

The resolvent of the diode, $\res_{\text{diode}}$, is given by $\res_{\text{diode}}^{-1}(x) = (I + \lambda
R_\text{diode}(x) - \lambda v^\star)$. There
is no analytic expression for this operator.  Rather, the resolvent is computed
numerically using the guarded Newton algorithm \cite{Parikh2013}.  

Figure~\ref{fig:envelope_detector_inverse} shows the results of performing this scheme with
an input of $v^\star = \sin(2\pi t)$ A, with $R = 1\, \Omega$, $C = 1$ F, $I_s =
1\times10^{-14}$ A, $n = 1$ and $V_T = 0.02585$ V.  The number of time steps used is
$500$.

\begin{figure}[h]
        \centering
        \begin{tikzpicture}
                \begin{groupplot}
                        [
                        group style={
                                group size=1 by 2,
                                vertical sep = 0.5cm
                        },
                        width=0.5\textwidth,
                        height=4cm,
                        cycle list name=colors,
                        grid=both,
                        grid style={line width=.1pt, draw=Gray!20},
                        axis x line=bottom,
                        axis y line=left
                        ]
                        \nextgroupplot[ylabel={\footnotesize Voltage (V)}, xmin=0, xmax=1]
                        \addplot[CornflowerBlue] table [x = t, y = v, col sep = comma, mark = none]{"./envelope_detector_inverse.csv"};
                                \addlegendentry{\footnotesize $v$ - input};
                                
                                \nextgroupplot[xlabel={\footnotesize Time (s)}, ylabel={\footnotesize Current (A)}, xmin=0, xmax=1]
                        \addplot[BurntOrange] table [x = t, y = i, col sep = comma, mark = none]{"./envelope_detector_inverse.csv"};
                                \addlegendentry{\footnotesize $i$ - output};
                                
                \end{groupplot}
        \end{tikzpicture}
        \caption{Input voltage $v^\star$ and the resulting current $i$ for an
        envelope detector.  One period of a periodic input and output is shown.}%
        \label{fig:envelope_detector_inverse}
\end{figure}
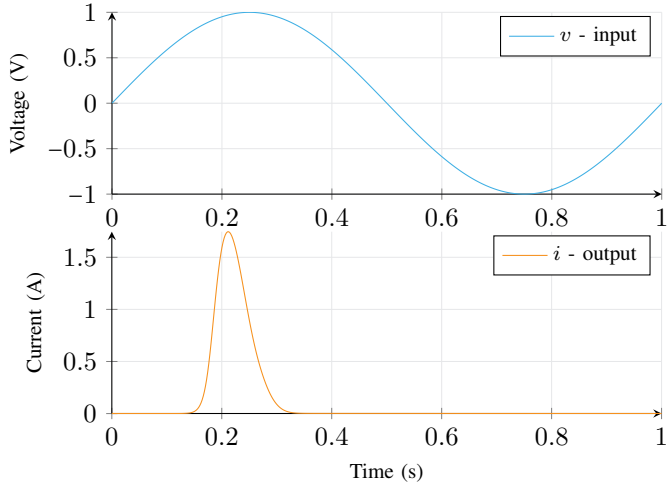

\section{Monotonicity and cyclo-passivity}
\label{sec:cyclo}

Monotonicity is closely related to the classical notions of
cyclo-passivity and passivity.
Cyclo-passivity is a generalisation of 
passivity to storages that are not necessarily bounded, allowing the inference of
instability theorems.  Cyclo-dissipativity was first introduced by
Willems \cite{Willems1974}, and later developed by Hill and Moylan \cite{Hill1975}. 
For recent work on cyclo-dissipativity of multi-ports, see
van der Schaft and Jeltsema \cite{vanderSchaft2020a, vanderSchaft2021} and van der Schaft \cite{vanderSchaft2020}.
To the best of the authors' knowledge, the incremental version of this property has never been
studied.  We define incremental cyclo-passivity as follows:

\begin{definition}
        A relation $R$ on $L_2$ is said to be \emph{incrementally cyclo-passive} if, for all
        $T$, and all $(u, y), (v, w) \in R$ with the property that $u(-T) = u(T)$ and
        $v(-T) = v(T)$ (respectively for $y$ and $w$), then
        \begin{IEEEeqnarray*}{rCl}
                \bra{u - v}\ket{y - w} \geq 0.
        \end{IEEEeqnarray*}
\end{definition}

This corresponds to monotonicity on the space of all periodic trajectories \emph{of any period}.
The property required for the computational methods described in this paper,
monotonicity on periodic signals with a particular period, is a weaker notion.

Monotonicity also relates to the classical input/output notion of incremental
passivity.  Monotonicity is the property that $\bra{u - v}\ket{y - w} \geq 0$ for $y
\in S(u), v \in S(w)$, agnostic of the space that $u$ and $v$ belong to.  If $S$ is
an operator on the extended $L_2$ space (see \cite{Desoer1975}), then 
monotonicity of $S$ is precisely incremental passivity of $S$.  For linear relations,
incremental passivity is equivalent to passivity.

In the linear, time invariant case, the physical property of passivity allows
questions to be answered in a computationally tractable way, for example, a passive
storage function can be found by solving an LMI \cite{Willems1972}.  For
nonlinear passive systems, these computationally tractable methods no longer apply, in
general.  For nonlinear systems with incremental properties, however, tractable
methods do exist.  This is the fundamental result of contraction theory
\cite{Lohmiller1998} and has been noted more recently in dissipativity analysis by
Verhoek, Koelewijn and T\'oth \cite{Verhoek2020} and Forni, Sepulchre and van der
Schaft \cite{Forni2013c}.  The approaches in these works differ from that
of this paper, however, in their reliance on differentiable state space models and 
state-dependent linear matrix inequalities, rather than monotone operator methods.

\section{Conclusions}
\label{sec:conclusions}

We have applied monotone operator optimization methods to the problem of computing
the periodic output of a periodically forced, maximal monotone system.  Splitting
methods allow the algorithm to be separated in a way which mirrors the structure of
the system.  This method has been demonstrated on the class of electrical 1-port
circuits formed from series and parallel interconnections of maximal monotone
resistors and LTI capacitors and inductors.

Early work by Desoer and Wu \cite{Desoer1974} used maximal monotonicity to prove
existence and uniqueness of solutions to networks where not only the resistors, but
also the capacitors and inductors, are described by maximal monotone relations.  The
question of whether the method developed in this paper applies to this broader class
of systems is left open for future research. The treatment of supplies
other than $i\tran v$ is also left open for future research.  
A compelling open question is that of monotone 1-port
synthesis - given an arbitrary monotone relation between current and voltage, when
can it be synthesised as the port behavior of a 1-port built from basic monotone
elements?


\bibliographystyle{IEEEtran}
\bibliography{IEEEabrv,monotone} 

\begin{thebibliography}{10}
\providecommand{\url}[1]{#1}
\csname url@samestyle\endcsname
\providecommand{\newblock}{\relax}
\providecommand{\bibinfo}[2]{#2}
\providecommand{\BIBentrySTDinterwordspacing}{\spaceskip=0pt\relax}
\providecommand{\BIBentryALTinterwordstretchfactor}{4}
\providecommand{\BIBentryALTinterwordspacing}{\spaceskip=\fontdimen2\font plus
\BIBentryALTinterwordstretchfactor\fontdimen3\font minus
  \fontdimen4\font\relax}
\providecommand{\BIBforeignlanguage}[2]{{%
\expandafter\ifx\csname l@#1\endcsname\relax
\typeout{** WARNING: IEEEtran.bst: No hyphenation pattern has been}%
\typeout{** loaded for the language `#1'. Using the pattern for}%
\typeout{** the default language instead.}%
\else
\language=\csname l@#1\endcsname
\fi
#2}}
\providecommand{\BIBdecl}{\relax}
\BIBdecl

\bibitem{Duffin1946}
R.~J. Duffin, ``Nonlinear networks. {{I}},'' \emph{Bulletin of the American
  Mathematical Society}, vol.~52, no.~10, pp. 833--839, 1946.

\bibitem{Minty1960}
G.~J. Minty, ``Monotone networks,'' \emph{Proceedings of the Royal Society of
  London. Series A. Mathematical and Physical Sciences}, vol. 257, no. 1289,
  pp. 194--212, Sep. 1960.

\bibitem{Minty1961}
------, ``On the maximal domain of a ``monotone'' function.'' \emph{The
  Michigan Mathematical Journal}, vol.~8, no.~2, pp. 135--137, 1961.

\bibitem{Minty1961a}
------, ``Solving {{Steady}}-{{State Nonlinear Networks}} of '{{Monotone}}'
  {{Elements}},'' \emph{IRE Transactions on Circuit Theory}, vol.~8, no.~2, pp.
  99--104, Jun. 1961.

\bibitem{Desoer1974}
C.~A. Desoer and F.~F. Wu, ``\BIBforeignlanguage{English}{Nonlinear {{Monotone
  Networks}}},'' \emph{\BIBforeignlanguage{English}{SIAM Journal on Applied
  Mathematics; Philadelphia}}, vol.~26, no.~2, p.~19, 1974.

\bibitem{Rockafellar1976}
R.~T. Rockafellar, ``Monotone {{Operators}} and the {{Proximal Point
  Algorithm}},'' \emph{SIAM Journal on Control and Optimization}, vol.~14,
  no.~5, pp. 877--898, Aug. 1976.

\bibitem{Rockafellar1997}
R.~T. Rockafellar and R.~J.-B. Wets, \emph{Variational {{Analysis}}}, 1997.

\bibitem{Ryu2016}
E.~K. Ryu and S.~Boyd, ``A primer on monotone operator methods,'' \emph{Appl.
  Comput. Math.}, vol.~15, no.~1, pp. 3--43, 2016.

\bibitem{Ryu2020}
E.~K. Ryu and W.~Yin, \emph{Large-{{Scale Convex Optimization}} via {{Monotone
  Operators}}}, draft~ed., 2020.

\bibitem{Parikh2013}
N.~Parikh and S.~Boyd, ``Proximal {{Algorithms}},'' \emph{Foundations and
  Trends in Optimization}, vol.~1, no.~3, pp. 123--231, 2013.

\bibitem{Bertsekas2011}
D.~P. Bertsekas, ``Incremental proximal methods for large scale convex
  optimization,'' \emph{Mathematical Programming}, vol. 129, no.~2, pp.
  163--195, 2011.

\bibitem{Combettes2011}
P.~L. Combettes and J.-C. Pesquet, ``Proximal {{Splitting Methods}} in {{Signal
  Processing}},'' in \emph{Fixed-{{Point Algorithms}} for {{Inverse Problems}}
  in {{Science}} and {{Engineering}}}, ser. Springer {{Optimization}} and {{Its
  Applications}}, H.~H. Bauschke, R.~S. Burachik, P.~L. Combettes, V.~Elser,
  D.~R. Luke, and H.~Wolkowicz, Eds.\hskip 1em plus 0.5em minus 0.4em\relax
  {New York, NY}: {Springer}, 2011, pp. 185--212.

\bibitem{Heemels2017}
W.~Heemels, V.~Sessa, F.~Vasca, and M.~Camlibel, ``Computation of periodic
  solutions in maximal monotone dynamical systems with guaranteed
  consistency,'' \emph{Nonlinear Analysis: Hybrid Systems}, vol.~24, pp.
  100--114, 2017.

\bibitem{Feldmann1996}
P.~Feldmann, B.~Melville, and D.~Long, ``Efficient frequency domain analysis of
  large nonlinear analog circuits,'' in \emph{Proceedings of {{Custom
  Integrated Circuits Conference}}}, 1996, pp. 461--464.

\bibitem{Blagquiere1966}
A.~Blagqui{\`e}re, \emph{Nonlinear {{System Analysis}}}.\hskip 1em plus 0.5em
  minus 0.4em\relax {Academic Press}, 1966.

\bibitem{Krylov1947}
N.~S. Krylov and N.~N. Bogoliubov, \emph{Introduction to Non-Linear
  Mechanics}.\hskip 1em plus 0.5em minus 0.4em\relax {Princeton}: {Princeton
  University Press}, 1947.

\bibitem{Slotine1991}
J.-J.~E. Slotine and W.~Li, \emph{Applied Nonlinear Control}.\hskip 1em plus
  0.5em minus 0.4em\relax {Englewood Cliffs, N.J}: {Prentice Hall}, 1991.

\bibitem{Aprille1972}
T.~Aprille and T.~Trick, ``Steady-state analysis of nonlinear circuits with
  periodic inputs,'' \emph{Proceedings of the IEEE}, vol.~60, no.~1, pp.
  108--114, 1972.

\bibitem{Brogliato2016}
B.~Brogliato, \emph{Nonsmooth {{Mechanics}}: {{Models}}, {{Dynamics}} and
  {{Control}}}, ser. Communications and {{Control Engineering}}.\hskip 1em plus
  0.5em minus 0.4em\relax {Springer International Publishing}, 2016.

\bibitem{Cellier2006}
F.~E. Cellier and E.~Kofman, \emph{Continuous System Simulation}.\hskip 1em
  plus 0.5em minus 0.4em\relax {New York}: {Springer}, 2006.

\bibitem{Camlibel2013}
M.~K. Camlibel and A.~J. van~der Schaft, ``Incrementally port-{{Hamiltonian}}
  systems,'' in \emph{52nd {{IEEE Conference}} on {{Decision}} and
  {{Control}}}, Dec. 2013, pp. 2538--2543.

\bibitem{Desoer1975}
C.~A. Desoer and M.~Vidyasagar, \emph{Feedback {{Systems}}:
  {{Input}}\textendash{{Output Properties}}}.\hskip 1em plus 0.5em minus
  0.4em\relax {Elsevier}, 1975.

\bibitem{Yakubovich1962}
V.~A. Yakubovich, ``The solution of some matrix inequalities encountered in
  automatic control theory,'' \emph{Dokl. Akad. Nauk SSSR.}, vol. 143, no.~6,
  pp. 1304--1307, 1962.

\bibitem{Kalman1963}
R.~E. Kalman, ``Lyapunov {{Functions}} for the {{Problem}} of {{Lur}}'e in
  {{Automatic Control}},'' \emph{Proceedings of the National Academy of
  Sciences}, vol.~49, no.~2, pp. 201--205, 1963.

\bibitem{Popov1964}
V.~M. Popov, ``Hyperstability and optimality of automatic systems with several
  control functions,'' \emph{Rev. Roumaine Sci. Tech.}, vol.~9, no.~4, pp.
  629--890, 1964.

\bibitem{Brogliato2020}
B.~Brogliato and A.~Tanwani, ``Dynamical {{Systems Coupled}} with {{Monotone
  Set}}-{{Valued Operators}}: {{Formalisms}}, {{Applications}},
  {{Well}}-{{Posedness}}, and {{Stability}},'' \emph{SIAM Review}, vol.~62,
  no.~1, pp. 3--129, 2020.

\bibitem{Camlibel2016}
M.~K. Camlibel and J.~M. Schumacher, ``Linear passive systems and maximal
  monotone mappings,'' \emph{Mathematical Programming}, vol. 157, no.~2, pp.
  397--420, 2016.

\bibitem{Brogliato2004}
B.~Brogliato, ``Absolute stability and the {{Lagrange}}\textendash{{Dirichlet}}
  theorem with monotone multivalued mappings,'' \emph{Systems \& Control
  Letters}, vol.~51, no.~5, pp. 343--353, 2004.

\bibitem{Angeli2003}
D.~Angeli and E.~D. Sontag, ``Monotone control systems,'' \emph{IEEE
  Transactions on Automatic Control}, vol.~48, no.~10, pp. 1684--1698, Oct.
  2003.

\bibitem{Rockafellar1970}
R.~T. Rockafellar, ``On the {{Maximality}} of {{Sums}} of {{Nonlinear Monotone
  Operators}},'' \emph{Transactions of the American Mathematical Society}, vol.
  149, no.~1, pp. 75--88, 1970.

\bibitem{Combettes2018}
P.~L. Combettes, ``Monotone operator theory in convex optimization,''
  \emph{Mathematical Programming}, vol. 170, no.~1, pp. 177--206, 2018.

\bibitem{Bauschke2011}
H.~H. Bauschke and P.~L. Combettes, \emph{Convex {{Analysis}} and {{Monotone
  Operator Theory}} in {{Hilbert Spaces}}}, ser. {{CMS Books}} in
  {{Mathematics}}.\hskip 1em plus 0.5em minus 0.4em\relax {New York, NY}:
  {Springer New York}, 2011.

\bibitem{Banach1922}
S.~Banach, ``\BIBforeignlanguage{pl}{{Sur les op\'erations dans les ensembles
  abstraits et leur application aux \'equations int\'egrales}},''
  \emph{\BIBforeignlanguage{pl}{Fundamenta Mathematicae}}, vol.~3, pp.
  133--181, 1922.

\bibitem{Mann1953}
W.~R. Mann, ``\BIBforeignlanguage{en}{Mean value methods in iteration},''
  \emph{\BIBforeignlanguage{en}{Proceedings of the American Mathematical
  Society}}, vol.~4, no.~3, pp. 506--506, Mar. 1953.

\bibitem{Krasnoselskij1955}
M.~A. Krasnosel'skij, ``\BIBforeignlanguage{Russian}{{Zwei Bemerkungen \"uber
  die Methode der sukzessiven Approximationen}},''
  \emph{\BIBforeignlanguage{Russian}{Uspekhi Matematicheskikh Nauk [N. S.]}},
  vol.~10, no. 1(63), pp. 123--127, 1955.

\bibitem{Douglas1956}
J.~Douglas and H.~H. Rachford, ``On the {{Numerical Solution}} of {{Heat
  Conduction Problems}} in {{Two}} and {{Three Space Variables}},''
  \emph{Transactions of the American Mathematical Society}, vol.~82, no.~2, pp.
  421--439, 1956.

\bibitem{Lions1979}
P.~L. Lions and B.~Mercier, ``Splitting {{Algorithms}} for the {{Sum}} of {{Two
  Nonlinear Operators}},'' \emph{SIAM Journal on Numerical Analysis}, vol.~16,
  no.~6, pp. 964--979, 1979.

\bibitem{Svaiter2011}
B.~F. Svaiter, ``On {{Weak Convergence}} of the
  {{Douglas}}\textendash{{Rachford Method}},'' \emph{SIAM Journal on Control
  and Optimization}, vol.~49, no.~1, pp. 280--287, Jan. 2011.

\bibitem{Boyd2010}
S.~Boyd, N.~Parikh, E.~Chu, B.~Peleato, and J.~Eckstein, ``Distributed
  {{Optimization}} and {{Statistical Learning}} via the {{Alternating Direction
  Method}} of {{Multipliers}},'' \emph{Foundations and Trends\textregistered{}
  in Machine Learning}, vol.~3, no.~1, pp. 1--122, 2010.

\bibitem{Bott1949}
R.~Bott and R.~J. Duffin, ``Impedance {{Synthesis}} without {{Use}} of
  {{Transformers}},'' \emph{Journal of Applied Physics}, vol.~20, no.~8, pp.
  816--816, 1949.

\bibitem{Sontag2010}
E.~D. Sontag, ``Contractive {{Systems}} with {{Inputs}},'' in
  \emph{Perspectives in {{Mathematical System Theory}}, {{Control}}, and
  {{Signal Processing}}}, M.~Morari, M.~Thoma, J.~C. Willems, S.~Hara, Y.~Ohta,
  and H.~Fujioka, Eds.\hskip 1em plus 0.5em minus 0.4em\relax {Berlin,
  Heidelberg}: {Springer Berlin Heidelberg}, 2010, vol. 398, pp. 217--228.

\bibitem{Sandberg1992}
I.~W. Sandberg, ``Approximately-finite memory and input-output maps,''
  \emph{IEEE Transactions on Circuits and Systems I: Fundamental Theory and
  Applications}, vol.~39, no.~7, pp. 549--556, Jul. 1992.

\bibitem{Willems1974}
J.~C. Willems, ``Qualitative {{Behavior}} of {{Interconnected Systems}},'' in
  \emph{Annals of {{Systems Research}}: {{Publikatie}} van de {{Systeemgroep
  Nederland Publication}} of the {{Netherlands Society}} for {{Systems
  Research}}}, ser. Annals of {{Systems Research}}, B.~{van Rootselaar},
  Ed.\hskip 1em plus 0.5em minus 0.4em\relax {Boston, MA}: {Springer US}, 1974,
  pp. 61--80.

\bibitem{Hill1975}
D.~J. Hill and P.~J. Moylan,
  \emph{\BIBforeignlanguage{English}{Cyclo-Dissipativeness, Dissipativeness and
  Losslessness for Nonlinear Dynamical Systems}}.\hskip 1em plus 0.5em minus
  0.4em\relax {Newcastle, N.S.W.}: {University of Newcastle, Dept. of
  Electrical Engineering}, 1975.

\bibitem{vanderSchaft2020a}
A.~{van der Schaft} and D.~Jeltsema, ``Limits to {{Energy Conversion}},''
  \emph{arXiv:2006.15953 [math]}, Jun. 2020.

\bibitem{vanderSchaft2021}
A.~{Van der Schaft} and D.~Jeltsema, ``On {{Energy Conversion}} in
  {{Port}}-{{Hamiltonian Systems}},'' \emph{arXiv:2103.09116 [math]}, 2021.

\bibitem{vanderSchaft2020}
A.~{van der Schaft}, ``Cyclo-dissipativity revisited,'' \emph{arXiv:2003.10143
  [math]}, Mar. 2020.

\bibitem{Willems1972}
J.~C. Willems, ``Dissipative {{Dynamical Systems Part II}}: {{Linear}} systems
  with quadratic supply rates,'' \emph{Archive for Rational Mechanics and
  Analysis}, vol.~45, no.~5, pp. 352--393, 1972.

\bibitem{Lohmiller1998}
W.~Lohmiller and J.-J.~E. Slotine, ``On {{Contraction Analysis}} for
  {{Non}}-linear {{Systems}},'' \emph{Automatica}, vol.~34, no.~6, pp.
  683--696, 1998.

\bibitem{Verhoek2020}
C.~Verhoek, P.~J.~W. Koelewijn, and R.~T{\'o}th, ``Convex {{Incremental
  Dissipativity Analysis}} of {{Nonlinear Systems}},'' \emph{arXiv:2006.14201
  [cs, eess]}, Jun. 2020.

\bibitem{Forni2013c}
F.~Forni, R.~Sepulchre, and A.~{van der Schaft}, ``On differential passivity of
  physical systems,'' \emph{52nd IEEE Conference on Decision and Control}, pp.
  6580--6585, 2013.

\end{thebibliography}

\end{document}